\newtheorem{theorem}{Theorem}
\title{Quantification of the Leakage in Federated Learning}
\author{%
  Zhaorui Li\thanks{work done while at Alibaba security} \\
  Shanghai Jiao Tong University \\
  \texttt{lizhaorui@sjtu.edu.cn} \\
  \And
  Zhicong Huang \\
  Alibaba security \\
  \texttt{zhicong.hzc@alibaba-inc.com} \\
  \AND
  Chaochao Chen \\
  Ant Financial \\
  \texttt{chaochao.ccc@alipay.com} \\
  \And
  Cheng Hong \\
  Alibaba security \\
  \texttt{vince.hc@alibaba-inc.com} \\
}
\begin{document}

\maketitle

\begin{abstract}
  With the growing emphasis on users' privacy, federated learning has become more and more popular. Many architectures have been raised for a better security. Most architecture work on the assumption that data's gradient could not leak information. However, some work, recently, has shown such gradients may lead to leakage of the training data. In this paper, we discuss the leakage based on a federated approximated logistic regression model and show that such gradient's leakage could leak the complete training data if all elements of the inputs are either 0 or 1.
\end{abstract}

\section{Introduction} \label{introduction}
Federated learning, where multiple parties construct a joint model with each party's own data, has become more and more popular with the increase of the emphasis on privacy. Many architectures have been proposed (e.g.\citet{mcmahan2016communication} and \citet{konevcny2016federated}) for a tradeoff between effeciency and security. However some work have analyzed the leakage in federated learning (e.g. \citet{hitaj2017deep}). Hence many defense work have been proposed (e.g. \citet{bonawitz2017practical} and \citet{geyer2017differentially}). One of the defenses is to use additive homomorphic encryption to encrypt model parameters for secure aggregation during the update procedure (\citet{aono2017privacy}).

However this technique might not mitigate the leakage in federated learning. \citet{melis2018exploiting} has shown an honest-but-curious participant could obtain the gradient computed by others through the difference of the global joint model and thus can infer unintended feature of the training data. \citet{ligeng2019leakage} `steal's the training data pixel-wise from gradients. But these methods can only work on complicated networks (e.g. Convolutional Neural Network) due to its huge parameters and do not apply to simple models like the logistic regression model.

For logistic regression model, the loss function cannot be directly encrypted by additive homomorphic encryption due to the imcompatiablity of sigmoid function and additive homomorphic encryption. Thus many approximation work (e.g. \citet{aono2016scalable}) are proposed, of which the goals are to increase accuracy as well as approxiamate the sigmoid function for additive homomorphic encryption.

In this paper, we discuss the leakage of the federated approximated logistic regression model in the case where all elements of the input are binary, which is a widely used encoding method in gene data analysis (e.g. \citet{uhlerop2013privacy}) and risk analysis. The loss function of the model is approximated as the way raised by \citet{aono2016scalable}. We will show the training data can completely be inferred by an honest-but-curious participant.

\section{Related Work}
\citet{aono2017privacy} has shown part of the training data is leaked in collaborative learning in the situation that one batch only contains a single data. \citet{melis2018exploiting} proved that the features of the training data which are unrelated with the model target leak through the update procedure. \citet{ligeng2019leakage} use an optimization method to infer the whole training data based on the gradient leaked from the update procedure. However this optimization method does not apply to the approximated logistic regression model since gradients computed based on different batches could be same. In addition, none of these works consider the case where the gradient is computed based on multiple batches.

\section{Leakage Analysis}

\subsection{Preliminary} \label{Background}
Consider the two-party case of federated learning, where two parties (\textit{Alice} and \textit{Bob}) share the same logistic regression model. The parameters of the federated model is $\theta$ and the feature num of inputs is $d$. Consider the case where all elements of the input data are either 0 or 1 which has practial application in domains like gene analysis (e.g. \citet{uhlerop2013privacy}) and risk analysis.

The way of training is the same as the horizontal case in \citet{yang2019federated}: In each iteration, (1) \textit{Alice} (\textit{Bob}) computes local training gradients $\Delta_n^A$ ($\Delta_n^B$) by Stochastic Gradient Descent (SGD) and sends $\Delta_n^A$ ($\Delta_n^B$) to the server. (2) The server uses secure aggregation raised by \citet{aono2017privacy} to aggregate $\Delta\theta_n=\frac{1}{2}(\Delta_n^A+\Delta_n^B)$ (3) The server sends the aggregated gradient $\Delta_n$ back to \textit{Alice} and \textit{Bob}. (4) \textit{Alice} (\textit{Bob}) updates his local model by $\theta_{n+1}=\theta_n-\Delta\theta_n$.

The loss function of this model is $l=log2-\frac{1}{2}(\theta^Tx)y+\frac{1}{8}(\theta^Tx)^2$ the same as the approximated way of \citet{aono2016scalable} for the secure aggregation in the third step of training ($x$ is a piece of training data and $y$ is its label). The gradient of the loss function in the matrix form is $\nabla=\frac{1}{4}X^TX\theta-\frac{1}{2}X^TY$.

Notice that $\Delta_n^B=2\Delta\theta_n-\Delta_n^A$. Suppose \textit{Alice} is an honest-but-curious participant and thus $\Delta_n^B$ is leaked to \textit{Alice}. We assume \textit{Bob} uses the same data for all iterations and \textit{Alice} can obtain as many pairs ($\Delta_n^B$, $\theta_{n-1}$) as possible. There are two common method to compute the training gradients in the first step of training: \textit{synchronized} and \textit{asynchronized}. In the following sections we will ommit the superscript $B$ and the subscript $n$ and discuss what \textit{Alice} can infer from the obtained set $\mathcal{S}=\{(\Delta$, $\theta)\}$ seperately with both methods.

\subsection{Synchronized} \label{OneBatch}
We first consider the \textit{synchronized} case, where \textit{Bob} calculates gradients based on his own data in current batch. Without loss of generality we assume \textit{Bob} only has one batch of data $X$ and the corresponding labels $Y$. Let $\alpha$ denote $X^T$ and $\beta$ denote $X^TY$ and thus $\Delta=\lambda\nabla=\lambda(\frac{1}{4}\alpha\theta-\frac{1}{2}\beta)$. Notice that $\Delta_{(i)}=\lambda(\frac{1}{4}\alpha_{(i)}\theta-\frac{1}{2}\beta_{(i)})$ (the subscript $(i)$ denote the $i$th row of a matrix) is a linear equation. Hence $\alpha_{(i)}$ and $\beta_{(i)}$ can be solved out if $\|\mathcal{S}\|\geq d$. Therefore $\alpha=X^TX$ and $\beta=X^TY$ are leaked. If the batch $X$ contains $m$ samples, \textit{Alice} can obtain:
\begin{equation}
  \label{alpha}
  \alpha_{ij}=\sum_{k=1}^mx_{ki}x_{kj} \quad i,j=1,2,\dots,d
\end{equation}
where $X=\{x_{ki}\}_{k=1,2,\cdots,m,i=1,2,\cdots,d}$ is \textit{Bob}'s private data. Futhermore, \textit{Alice} can change the knotty quadratic equation set to a linear programming and solve out $X$: 

According to Equation.\eqref{alpha}, $\alpha_{ii}=\sum_{k=1}^mx_{ki}^2$. Since $x_{ki}\in\{0, 1\}$, $x_{ki}=x_{ki}^2$. Thus we have $\alpha_{ii}=\sum_{k=1}^mx_{ki}$. For $\alpha_{ij},i\neq j$, let $\delta_{ijk}$ denote $x_{ki}x_{kj}$, we have $\alpha_{ij}=\sum_{k=1}^m\delta_{ijk}$. Given that if and only if $x_{ki}=x_{kj}=1$, $\delta_{ijk}=1$, otherwise $\delta_{ijk}=0$, we can rewrite this relation to a if-then constraint: if $\delta_{ijk}=1$ then $x_{ki}+x_{kj}=2\geq2$ and if $\delta_{ijk}=0$ then $x_{ki}+x_{kj}\leq1$.

Consider the first constraint: if $\delta_{ijk}=1$ then $x_{ki}+x_{kj}\geq2$. The inequality $-x_{ki}-x_{kj}+2\leq 2(1-\delta_{ijk})$ is equivalent to the constriant (When $\delta_{ijk}=1$, the inequality is the original constraint. When $\delta_{ijk}=0$, the inequality $-x_{ki}-x_{kj}+2\leq 2$ always holds). Hence $x_{ki}+x_{kj}\geq2\delta_{ijk}$. Similarly we can get the equivalent linear constraint form of the second constraint: $x_{ki}+x_{kj}-1\leq\delta_{ijk}$.

Hence \textit{Alice} can change Equation.\eqref{alpha} to the following linear constraints:

\begin{equation*}
  \begin{aligned}
    \sum_{k=1}^mx_{ki}=\alpha_{ii}&,\,i=1,2,\cdots,d\\
    \sum_{k=1}^m\delta_{ijk}=\alpha_{ij}&,\,i,j=1,2,\cdots,d,i\neq j\\
    x_{ki}+x_{kj}\geq2\delta_{ijk}&,\,i,j=1,2,\cdots,d,i\neq j,\,k=1,2,\cdots,m\\
    x_{ki}+x_{kj}-1\leq\delta_{ijk}&,\,i,j=1,2,\cdots,d,i\neq j,\,k=1,2,\cdots,m\\
    x_{ij},\delta_{ijk}\in\{0,1\}&,\,i,j=1,2,\cdots,d,\,k=1,2,\cdots,m
  \end{aligned}
\end{equation*} 

This linear programming problem can be easily solved by revised simplex method or inner point method as shown in Table.\ref{result}. Futhermore, even if the data has non-binary features, this method can separate them from binary features and reveal the latter.

\subsection{Asynchronized} \label{Multi Batches}
In this section we will discuss the leakage in \textit{asychronized} case, where \textit{Bob} uses serveral batches to calculate the gradients. For each batch $X_i$, \textit{Bob} calculates its gradients $\nabla_i$ based on the current local model $\theta_i$ and then updates the local model by $\theta_{i+1}=\theta_i-\lambda\nabla_i$. After all batches have been used \textit{Bob} pushes the difference between the current local model and the original global model $\Delta=\theta_1-\theta_{n+1}$ to the server. We first give the mathematical form of $\Delta$.

\begin{theorem} 
  \label{gamma}
  Suppose \textit{Bob} uses $n$ batches in \textit{asychronized} cases. Let $X_i$ denote the $i$th batch, $Y_i$ denote the corresponding labels, $\alpha_i$ denote $X_i^TX_i$ and $\beta_i$ denote $X_i^TY_i$. Therefore $\forall(\Delta,\theta)\in\mathcal{S},\Delta=\left[I-\prod_{i=1}^n(I-\frac{\lambda}{4}\alpha_i)\right]\theta-\frac{\lambda}{2}\Big(\sum_{i=1}^{n-1}\left[\prod_{j=i+1}^n(I-\frac{\lambda}{4}\alpha_j)\right]\beta_i+\beta_n\Big)$\footnote{$\prod$ denotes the continuous multiplication of matrices, i.e. $\prod_{i=1}^kM_i=M_kM_{k-1}\cdots M_1$} ($\lambda$ is the learning rate).
\end{theorem}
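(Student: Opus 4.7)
The plan is to recognize the iterated local SGD updates as an affine recursion in $\theta_i$ and unroll it by induction on $n$.

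First I would substitute the gradient formula $\nabla_i = \frac{1}{4}\alpha_i\theta_i - \frac{1}{2}\beta_i$ (which is just the matrix form of the loss gradient restricted to the $i$th batch) into the asynchronous local update $\theta_{i+1} = \theta_i - \lambda\nabla_i$. Setting $A_i := I - \frac{\lambda}{4}\alpha_i$ and $b_i := \frac{\lambda}{2}\beta_i$, each local step becomes the affine map $\theta_{i+1} = A_i\theta_i + b_i$. So the entire inner loop over batches is a composition of $n$ affine maps applied to the starting parameter $\theta_1 = \theta$.

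Next I would prove by induction on $n$ the closed form
$$\theta_{n+1} \;=\; \Big(\prod_{i=1}^{n} A_i\Big)\theta_1 \;+\; \sum_{i=1}^{n-1}\Big(\prod_{j=i+1}^{n} A_j\Big) b_i \;+\; b_n,$$
using the paper's right-to-left convention $\prod_{i=1}^{k} M_i = M_k M_{k-1}\cdots M_1$. The base case $n=1$ just reads $\theta_2 = A_1\theta_1 + b_1$. For the inductive step, apply $\theta_{n+2} = A_{n+1}\theta_{n+1} + b_{n+1}$ to the hypothesis and push $A_{n+1}$ inside each product on the left; this absorbs $A_{n+1}$ as the new leftmost factor of each product, promotes the previously free $b_n$ into the sum as the $i=n$ term $A_{n+1}b_n$, and leaves $b_{n+1}$ as the new trailing term. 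Finally, substituting $\theta_{n+1}$ into $\Delta = \theta_1 - \theta_{n+1}$ and rewriting $A_i, b_i$ back in terms of $\alpha_i, \beta_i$ gives exactly the stated formula.

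The main obstacle is purely bookkeeping: keeping the nonstandard right-to-left product convention consistent whenever the inductive step prepends $A_{n+1}$, and tracking the off-by-one between the ``standalone $b_n$'' term and the summation range $i=1,\dots,n-1$, so that after the inductive merger the range becomes $i=1,\dots,n$ and the new standalone term is $b_{n+1}$. Once the indexing is pinned down, the result is a direct unrolling of a linear recursion and no analytic content is needed.
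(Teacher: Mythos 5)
Your proposal is correct and is essentially the paper's own argument: the paper also proceeds by induction on the number of batches, writing the recursion as $\Delta_{k+1}=\frac{\lambda}{4}\alpha_{k+1}\theta+(I-\frac{\lambda}{4}\alpha_{k+1})\Delta_k-\frac{\lambda}{2}\beta_{k+1}$, which is exactly your affine map $\theta_{k+2}=A_{k+1}\theta_{k+1}+b_{k+1}$ after the change of variable $\Delta_k=\theta_1-\theta_{k+1}$. The only difference is that you unroll the recursion on $\theta$ and subtract at the end, while the paper unrolls it directly on the cumulative update $\Delta_k$; the bookkeeping with the right-to-left products is identical.
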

\begin{proof}
  Since for eack batch $X_i$, its gradient is computed as $\nabla_i=\frac{1}{4}\alpha_i\theta_i-\frac{1}{2}\beta_i$, the theorem is true when $n=1$. Now we assume that the theorem is true for any $n\leq k$ and any $n$ batches of data and consider the case $n=k+1$.

  Notice that $\Delta=\theta_1-\theta_{n+1}=\lambda\sum_{i=1}^n\nabla_i$. Let $\Delta_k$ denote $\lambda\sum_{i=1}^k\nabla_i$ and thus $\Delta_{k+1}=\Delta_k+\lambda\nabla_{k+1}$.

  Since $\nabla_{k+1}=\frac{1}{4}\alpha_{k+1}\theta_{k+1}-\frac{1}{2}\beta_{k+1}=\frac{1}{4}\alpha_{k+1}(\theta-\Delta_k)-\frac{1}{2}\beta_{k+1}$, we can obtain that 
  \begin{align*}
    \Delta_{k+1}&=\frac{\lambda}{4}\alpha_{k+1}\theta+(I-\frac{\lambda}{4}\alpha_{k+1})\Delta_k-\frac{\lambda}{2}\beta_{k+1}\\
    &=\frac{\lambda}{4}\alpha_{k+1}\theta+\left[I-\frac{\lambda}{4}\alpha_{k+1}-\prod_{i=1}^{k+1}(I-\frac{\lambda}{4}\alpha_i)\right]\theta\\
    &\quad-\frac{\lambda}{2}\Bigg(\sum_{i=1}^{k-1}\left[\prod_{j=i+1}^{k+1}(I-\frac{\lambda}{4}\alpha_j)\right]\beta_i+(I-\frac{\lambda}{4}\alpha_{k+1})\beta_k+\beta_{k+1}\Bigg)\\
    &=\left[I-\prod_{i=1}^{k+1}(I-\frac{\lambda}{4}\alpha_i)\right]\theta-\frac{\lambda}{2}\Bigg(\sum_{i=1}^{k}\left[\prod_{j=i+1}^{k+1}(I-\frac{\lambda}{4}\alpha_j)\right]\beta_i+\beta_{k+1}\Bigg)
  \end{align*}
  Hence the theorem is true for all $n\geq1$.
\end{proof}

By Theorem.\ref{gamma}, $\gamma=I-\prod_{i=1}^{k+1}(I-\frac{\lambda}{4}\alpha_i)$ and $\eta=\Big(\sum_{i=1}^{n-1}\left[\prod_{j=i+1}^n(I-\frac{\lambda}{4}\alpha_j)\right]\beta_i+\beta_n\Big)$ is leaked, as the same way of the leakage of $\alpha$ in Section.\ref{OneBatch}.
\begin{theorem}
  \label{myriad_solution}
  The solution of $\gamma=I-\prod_{i=1}^n(I-\frac{\lambda}{4}\alpha_i)$ is infinite under the constraint that $\forall i=1,2,\dots,n$, $\alpha_i$ is symmetrical\footnote{The constraint is because $\alpha_i=X_i^TX_i$}.
\end{theorem}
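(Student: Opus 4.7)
The plan is to recast the equation in terms of the symmetric matrices $A_i := I-\frac{\lambda}{4}\alpha_i$, so that the problem becomes: find infinitely many tuples of symmetric matrices $(A_1,\dots,A_n)$ with $A_n A_{n-1}\cdots A_1 = I-\gamma$. For $n\geq 2$, I expect the infinitude to come from a simple scalar-rescaling ambiguity inherent in any matrix product of length at least two, so the main effort is to exhibit an explicit one-parameter family and check that its members are pairwise distinct.

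Concretely, I would first fix any particular solution — Bob's actual batches supply one by construction. Then, for each nonzero $c\in\mathbb{R}$, I would replace $A_1$ by $c^{-1}A_1$ and $A_2$ by $cA_2$, leaving $A_3,\dots,A_n$ unchanged. The two scalar factors cancel inside the product, so $A_n\cdots A_1$ is preserved, and each rescaled matrix is still symmetric because it is a scalar multiple of a symmetric matrix. Translating back via $\alpha_i^{(c)}=\frac{4}{\lambda}(I-A_i^{(c)})$ yields $\alpha_1^{(c)}$ and $\alpha_2^{(c)}$ that are real linear combinations of $I$ and the original $\alpha_i$, hence symmetric, giving a candidate family indexed by $c\in\mathbb{R}\setminus\{0\}$.

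The remaining step is to verify that distinct $c$'s produce distinct tuples. A short computation of $\alpha_1^{(c)}$ as an affine function of $1/c$ shows it depends injectively on $c$ unless the original $A_1$ happens to be the zero matrix. In that exceptional case one has $\gamma=I$ and infinitude is evident for a different reason: any symmetric choice of $\alpha_2$ leaves the product identically zero. I do not anticipate a deep obstacle in carrying this out; the only delicate point is isolating and dispatching this degenerate edge case, which should take no more than a line. The contrast with the synchronized setting of Section \ref{OneBatch} (where each $\alpha_i$ is recovered uniquely from a single linear system) is therefore explained purely by the non-identifiability of a product decomposition under scalar rescaling.
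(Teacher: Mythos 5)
Your proposal is correct, but it takes a genuinely different route from the paper. The paper's proof is a degrees-of-freedom count: each symmetric $\alpha_i$ contributes $\frac{d(d+1)}{2}$ unknowns, so for $n\geq 2$ there are $n\cdot\frac{d(d+1)}{2}\geq d(d+1)>d^2$ variables against $d^2$ equations, and the paper concludes directly that the system has infinitely many solutions. You instead exhibit an explicit one-parameter family: starting from the true solution $(A_1,\dots,A_n)$ with $A_i=I-\frac{\lambda}{4}\alpha_i$, you rescale $A_1\mapsto c^{-1}A_1$, $A_2\mapsto cA_2$, preserve the product and the symmetry, and check injectivity in $c$ (which fails only when $A_1=0$, i.e.\ $\gamma=I$, where infinitude holds trivially since any symmetric $\alpha_2$ keeps the product zero). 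Your construction is arguably the more rigorous of the two: for a nonlinear polynomial system over the reals, having more unknowns than equations does not by itself guarantee infinitely many real solutions (compare $x^2+y^2=0$), so the paper's counting argument is really a heuristic, whereas your explicit family settles the claim outright given that a solution exists --- which it does, since Bob's actual batches furnish one. What the paper's count buys in exchange is a (heuristic) sense that the solution set is a variety of dimension roughly $n\frac{d(d+1)}{2}-d^2$, i.e.\ far more underdetermined than your single rescaling direction suggests; your argument identifies one concrete source of non-identifiability but not its full extent. Both arguments implicitly require $n\geq 2$, which you and the paper each note.
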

\begin{proof}
  When treating the equation as an equation set, we have $\frac{d(d+1)}{2}$ variables to determine $\alpha_i$ due to its symmetry. Therefore we have $n\times\frac{d(d+1)}{2}$ variables with $d^2$ equations. When $n\geq2$, $n\times\frac{d(d+1)}{2}\geq d(d+1)>d^2$. Thus the original equation has myriad solutions.
\end{proof}

According to Theorem.\ref{myriad_solution}, $\gamma$ leads to myriad possibilities of the batches matching the obtained $\Delta$. Notice the equation set $\eta$ contains $n*d$ variables with $d$ equations even when $\{\alpha_i\}_{i=1,2,\dots,n}$ is known, which means $\eta$ cannot help to reduce the possibilities. In this circumstance, it is unclear what further information beyond $\gamma$ and $\eta$ about the target's data could be inferred. It is interesting for future work to formally justify the leakage when infinite solutions are found, e.g., how much additional information is sufficient to reduce the solution space to a few plausible ones or even a single one.

\subsection{Multi-party}
In the \textit{synchronized} case where there are $k$ ($k\geq3$) parties in the federated learning, the global model is updated as $\theta_n=\frac{1}{k}\sum_{i=1}^k\theta_n^{(i)}$. Thus $-k\Delta\theta_n=\sum_{i=1}^k\Delta_n^{(i)}=\lambda\sum_{i=1}^k(\frac{1}{4}\alpha^{(i)}\theta_{n-1}-\frac{1}{2}\beta^{(i)})=\lambda\left[\frac{1}{4}(\sum_{i=1}^k\alpha^{(i)})\theta_{n-1}-\frac{1}{2}\sum_{i=1}^k\beta^{(i)}\right]$ (the superscript $(i)$ is used to denote the $i$th party).

Notice $P=\Bigl(\begin{smallmatrix} X^1 \\ \vdots \\ X^k \end{smallmatrix}\Bigr)$ is one solution of the equation $\sum_{i=1}^k\alpha^{(i)}=P^TP$. Therefore increasing the number of the parties in \textit{synchronized} case is equivalent to increasing the batch size. An honest-but-curious party can infer all other participants' data only not knowing their belongings.
  
However for the \textit{asychronized} case, $\sum_{i=1}^k\gamma^{(i)}=kI-\sum_{i=1}^k\prod_{j=1}^n(I-\frac{\lambda}{4}\alpha_j^{(i)})=I-\prod_{j=1}^n(I-\frac{\lambda}{4}P_j)$ has no simple solution for $\{P_j\}_{j=1,2,\cdots,n}$. Hence the increase of party num in asychronized case is unequal to the increase of batch size. We do not further analyze how to simplify the multi-party case to the two-party case since \textit{Alice} can only infer some constraints in the two-party case for now.

\section{Defense}
\subsection{Batch size}

\begin{table}[tbp]
  \centering
  \begin{tabular}{|c|c|c|c|c|}
    \hline
    \diagbox{$m$}{time (s)}{$d$} & 5 & 10 & 15 & 20\\
    \hline
    3 & 0.805 & 0.795 & 0.866 & 0.928\\
    \hline
    5 & 0.812 & 0.87 & 1.032 & 1.517\\
    \hline 
    8 & 0.83 & 7.231 & 3.751 & 4.43\\
    \hline
    9 & * & 17.659 & 74.469 & 89.852\\
    \hline
    11 & * & 39.295 & 665.628 & 1634.821\\
    \hline
  \end{tabular}
  \caption{$m$ represents the batch size, $d$ represents the feature num and * represents there are multiple batches that match the same gradient. The linear programming is solved by pulp library written in python with an intel i5-8500 CPU (3.00GHz) and the batch is sorted in alphabetical order to eliminate the impact of the sequence of the data in a batch.}
  \label{result}
\end{table}

In Section \ref{OneBatch} we use a linear programming method to obtain the whole batch data from the gradient computed, whereas this linear programming method has $d+d^2-d+2nd(d-1)=(2n+1)d^2-2nd$ constraints. As shown in Table \ref{result}, the more constraints the linear programming have, the longer time it takes to solve. Table.\ref{result} also shows with the increase of $m$ come multiple batches corresponding the same gradient even under the constriant that all elements are binary. 

\subsection{Batch gradient}
As discussed in Section \ref{Multi Batches}, \textit{Alice} cannot solve out \textit{Bob}'s data just based on the gradients. Thus avoiding leaking the batch gradient would be a effective method. The simplest way is to avoid using \textit{synchronized} method to compute the training gradients. Another method is to obscure the orginal gradient. For instance, \textit{Bob} may shuffle the sequence of his data to obfuscate \textit{Alice} or pushes the gradient selectively as proposed by \citet{shokri2015privacy}.



\section{Conclusion}
In this paper, we discuss the leakage in federated learning of approximated logistic regression model. We first showed how an honest-but-curious participant can easily infer the whole training data of other participants in synchronized case. We then quantified of the leakage in asychronized case and illustrated an honest-but-curious can infer nothing other than some certain constraints of the other participants' training data. We also analyzed how the hyperparameters of the learning, batch size and participant, affect the leakage and further proposed several plausible defenses. 


\bibliographystyle{unsrtnat}
\bibliography{ref}

\end{document}